\newif\ifArxiv
	\def\doi#1{\url{https://doi.org/#1}}}
\renewcommand{\orcidID}[1]{\href{https://orcid.org/#1}{\includegraphics[scale=.03]{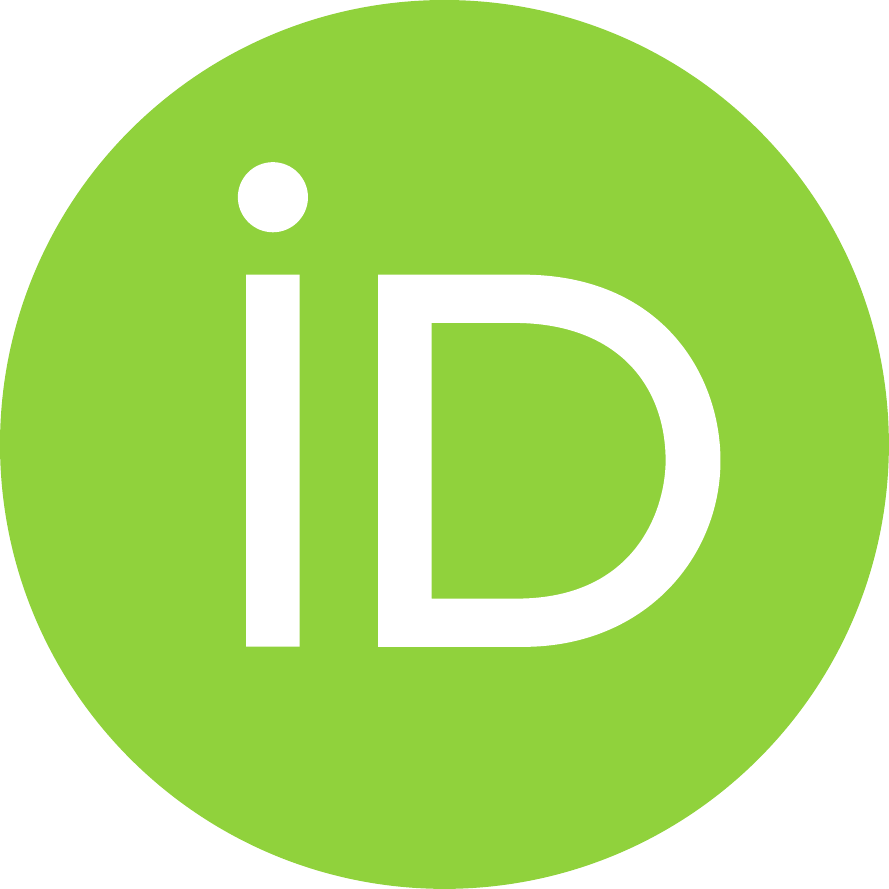}}} 
\newenvironment{usecounterof}[2]{
	\def\@tempb{#1}
	\expandafter\renewcommand\csname thetheorem\endcsname{\ref{#2}}\@nameuse\@tempb}{
	\@nameuse{end\@tempb}\addtocounter{theorem}{-1}}
\newcommand{\mypar}[1]{\smallskip\noindent{\bfseries #1.}}
\begin{document}

\title{Layered Area-Proportional\\ Rectangle Contact Representations\thanks{We acknowledge funding by the Austrian Science Fund (FWF) under grant P31119.}}

\author{Martin Nöllenburg\inst{1}\orcidID{0000-0003-0454-3937} \and
Anaïs Villedieu\inst{1}\orcidID{0000-0001-6196-8347} \and
Jules Wulms\inst{1}}

\authorrunning{M. Nöllenburg, A. Villedieu, J. Wulms}

\institute{Algorithms and Complexity Group, TU Wien, Vienna, Austria\\
\email{\{noellenburg,avilledieu,jwulms\}@ac.tuwien.ac.at}}

\maketitle

\begin{abstract}

We investigate two optimization problems on area-pro\-por\-tio\-nal rectangle contact representations for layered, embedded planar graphs. The vertices are represented as interior-disjoint unit-height rectangles of prescribed widths, grouped in one row per layer, and each edge is ideally realized as a rectangle contact of positive length. 
Such rectangle contact representations find applications in semantic word or tag cloud visualizations, where a collection of words is displayed such that pairs of semantically related words are close to each other. 
In this paper, we want to maximize the number of realized rectangle contacts or minimize the overall area of the rectangle contact representation, while avoiding any false adjacencies.
We present a network flow model for area minimization, a linear-time algorithm for contact maximization of two-layer graphs, and an ILP model for maximizing contacts of $k$-layer graphs.

\keywords{contact graphs \and layered planar graphs \and semantic word clouds}
\end{abstract}

\section{Introduction}
	Contact representations of planar graphs are a well-studied topic in graph theory, graph drawing, and computational geometry~\cite{f-rsrpg-13,k-kka-36,fmr-tcg-94}. 
Vertices are represented by geometric objects, e.g., disks or polygons, and two objects touch if and only if they are connected by an edge.
They find many applications, for instance in VLSI design~\cite{ys-fgd2rm-93}, cartograms~\cite{nk-sc-16}, or semantic word clouds~\cite{wu_semantic-preserving_2011,barth_semantic_2014}.

\begin{figure}[h]
	\centering
	\includegraphics[width=1\linewidth]{testwordle}
	\caption{Word cloud generated from the \href{https://apnews.com}{apnews.com} frontpage by \href{https://worditout.com}{worditout.com} on the day of the certification of  Joe Biden's victory in the 2020 US elections.}
	\label{fig:FAex}
\end{figure}

Word or tag clouds are popular visualizations that summarize textual information in an aesthetically pleasing way. 
They show the main themes of a text by displaying the most important keywords obtained from text analysis and scale the word size to their frequency in the text. 
Word clouds became widespread after the first automated generation tool ``Wordle'' was published in 2009~\cite{viegas_participatory_2009}. 

Word clouds with their different font sizes and words packed without semantic context, such as the one shown in Fig.~\ref{fig:FAex}, have also received some criticism as their audience sometimes fails at understanding the underlying data (while enjoying their playful nature)~\cite{hearst_evaluation_2020}. 
For example, neighboring words that are not  semantically related can be misleading (see marked words in Fig.~\ref{fig:FAex}).
As a way to improve readability, \emph{semantic} word clouds have been introduced \cite{cui_context_2010,wu_semantic-preserving_2011,barth_semantic_2014}. 
In semantic word clouds, an underlying edge-weighted graph indicates the semantic relatedness of two words, whose positions are chosen 
such that semantically related words are next to each other while unrelated words are kept far apart.

Classic word clouds are often generated using forced-based approaches, alongside with a spiral placement heuristic \cite{viegas_participatory_2009, wang_edwordle:_2018, wang_shapewordle:_2020} that allows for a very compact final layout. This method is powerful even when the rough 
position of a word is dictated by an underlying map \cite{buchin_geo_2016, li_metro-wordle:_2018}.
Semantic word clouds on the other hand have been approached with many different techniques, e.g., force directed~\cite{cui_context_2010}, seam-carving~\cite{wu_semantic-preserving_2011}, and multidimensional scaling~\cite{barth_experimental_2014}. 
The problem has also been studied from a theoretical point of view, where an edge of the semantic word graph is realized if the bounding boxes of two related words properly touch; the realized edge weight is gained as profit.
Then the semantic word cloud problem can be phrased as the optimization problem to maximize the total profit. 
Barth et al.~\cite{barth_semantic_2014} and later Bekos et al.~\cite{bekos_improved_2017} gave several hardness and approximation results for this problem (and some variations) on certain graph classes. 
The underlying geometric problem also has links to more general contact graph representation problems, like rectangular layouts~\cite{buchsbaum_rectangular_2008} or cartograms~\cite{van_kreveld_rectangular_2007}.

In most of the literature about layered graphs, vertices are assigned to rows without a predefined left-to-right order, yet this has interesting properties in the context of word clouds. 
For instance, layered rectangle contact representations are compact, assuming a good assignment they have an even distribution of words and our eye naturally understands words grouped in rows or tables. 
In this paper we study row-based contact graphs of unit-height but arbitrary-width rectangles, which may represent the bounding boxes of words with fixed font size.

\mypar{Problem description} As input we take a layered graph~$G=(V,E)$ on $L$~layers, with an arbitrary number of vertices per layer. Each vertex $v_{i,j}\in V$ is indexed by its layer~$i\in[0,L-1]$ and its position~$j$ within the layer: $v_{i,j}$ is the $j^{th}$ vertex on the $i^{th}$ layer. The edge set~$E$ consists of edges connecting each vertex $v_{i,j}$ to its neighbors $v_{i,j-1}$ and $v_{i,j+1}$ on the same row (if they exist), and connections between adjacent rows form an internally triangulated graph.
We associate each vertex with an axis-aligned unit-height rectangle
$R_{i,j}$ with width $w_{i,j}$, and $y$-coordinate $i$. We want to compute its $x$-position $x_{i,j}$ given by the $x$-coordinate of its bottom left corner such that the rectangles do not overlap except on their boundaries (see Fig.~\ref{fig:grid}). Leaving whitespace between two rectangles on the same layer is allowed and forms a \emph{gap}.
Such a layout $\mathcal{R}$ is called a \emph{representation} of $G$. An edge $(u,v) \in E$ is \emph{realized} in a representation $\mathcal{R}$ if rectangles $R_u$ and $R_v$, representing vertices $u$ and $v$, intersect along their boundaries for a positive length $\varepsilon > 0$, which we denote by $(R_u,R_v) \in \mathcal{R}$. If $R_u$ and $R_v$ are horizontally adjacent we call the contact a \emph{horizontal contact}.

Otherwise, the intersection is located along a horizontal boundary if $R_u$ and $R_v$ are on adjacent layers; these are called \emph{vertical contacts}. 
Contacts between rectangles whose vertices are not adjacent in $G$ are \emph{false adjacencies}.
Such adjacencies can mislead a user to infer a link between unrelated words, invalidating the representation. Within this model we study two problem variations, \emph{area minimization} and \emph{contact maximization}.

\begin{figure}[tb]
	\centering
	\includegraphics[width=1\linewidth]{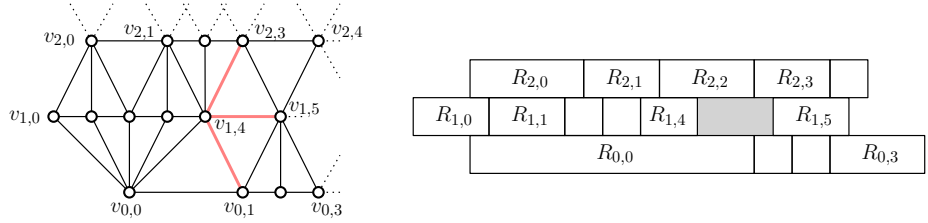}
	\caption{Partial drawing of a graph $G$, along with a representation $\mathcal{R}$ of the visible vertices of $G$. Red edges are not realized, due to the gray gap in $\mathcal{R}$.}
	\label{fig:grid}
\end{figure}

For the area minimization problem the goal is to produce a representation~$\mathcal{R}$ that minimizes the total width of the gaps in~$\mathcal{R}$. The contact maximization problem asks to maximize the number of adjacencies realized in~$\mathcal{R}$, as specified by edge set~$E$. For both optimization criteria, false adjacencies are forbidden: otherwise a trivial gap-less representation would always be a solution to the area minimization problem and in the case of contact maximization, false adjacencies may reduce the number of lost contacts with respect to a valid optimal solution as Fig.~\ref{fig:forbidden} shows. We say a representation is \emph{valid} if it has no false adjacencies.

\begin{figure}
	\centering
	\includegraphics{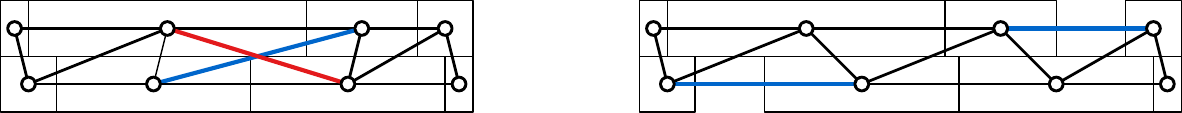}
	\caption{Allowing false adjacencies (red) could reduce lost adjacencies (blue).} 
	\label{fig:forbidden}
\end{figure}

\section{Area minimization}
To solve the area minimization problem, we construct a flow network $N=(G' = (V',E');l;c;b;cost)$ for a given vertex-weighted layered graph $G=(V,E)$, with edge capacity lower bound $l \colon E'\rightarrow \mathbb{R}^+_0$, edge capacity $c \colon E'\rightarrow \mathbb{R}^+_0$, vertex production/consumption $b \colon V'\rightarrow \mathbb{R}$ and cost function $cost \colon E'\rightarrow \mathbb{R}^+_0$. 
Each unit of cost will represent a unit length gap and each unit of flow on an edge will represent a unit length contact. To build the network we create two vertices $v^a$ and $v^b$ for each rectangle, that respectively receive the flow from the lower layer and output flow to the upper layer, and one for each potential gap, located between each sequential pair of rectangles in the same layer. Every edge $e$ that ends on a gap vertex has $cost(e) = 1$. We also add an edge $e$ between $v^a$ and $v^b$ for each $R_{i,j}$ with $l(e)=c(e)=w_{i,j}$ and no cost to ensure that rectangle nodes receive exactly as much flow as they are wide.

The intuition behind the network is that it represents a stack of layers consisting of rectangles and gaps, with a maximum width of $w_{\max}\cdot K$, $K$ being the maximum number of rectangles per layer, and $w_{\max}$ the width of the widest rectangle. To facilitate this flow on all layers, there are buffer vertices on both sides of each layer.
Each rectangle is as wide as the amount of flow its vertices $v^a$ and $v^b$ receive, and has contacts with its upper and lower neighbors as wide as the flow on the edges representing these contacts. 
Every vertex has edges to the layer above as far as its rectangle is allowed to have contacts: a rectangle $R_{i,j}$ that has only one upper neighbor, will have an edge to that neighbor, and to the gaps on that neighbor's right and left side. Any further edge would be to another rectangle with which $R_{i,j}$ should not share a contact, and such edges would hence result in false adjacencies. We picture the stack bottom-up, meaning that the flow comes in at the bottom layer and exits from the top layer. A gap block $g_{i,j}$ will reach as far left and right as its left and right neighbors in the same row: if rectangle $R_{i,j}$ lies directly left of $g_{i,j}$, then the furthest left upward neighbor of $R_{i,j}$ is the furthest left upward neighbor of $g_{i,j}$. If $g_{i,j}$ could reach even further, then it would essentially push $R_{i,j}$ into a false adjacency. \ifArxiv The exact construction is detailed in Appendix~\ref{app:flow} and sketched in Fig.~\ref{fig:flow}. \else The exact construction is detailed in the full paper~\cite{nvw-laprcr-21} and sketched in Fig.~\ref{fig:flow}.\fi

\begin{figure}
	\centering
	\includegraphics[width=1\linewidth]{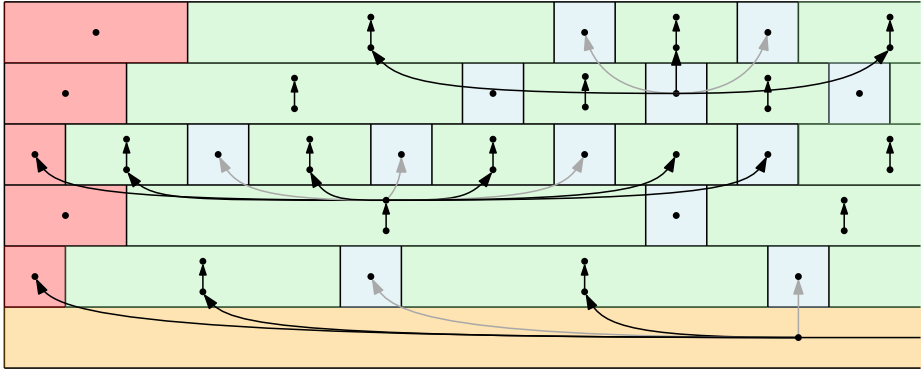}
	\caption{Parts of a flow network: outgoing edges from the source (orange), a rectangle (green) and a gap (blue); red buffer rectangles; gray edges have $\text{cost}$ 1.}
	\label{fig:flow}
\end{figure}

\begin{theorem}\label{thm:flow}
Given a graph~$G=(V,E)$, the cost of a minimum-cost flow~$f$ in $N$ equals the minimum total gap length of any valid representation of $G$. An area-minimal representation of~$G$ is constructed from~$f$ in polynomial time.
\end{theorem}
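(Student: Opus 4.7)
The plan is to prove the claim by establishing a cost-preserving correspondence between valid representations of $G$ and feasible flows of $N$. Concretely, I would show that (a) every valid representation $\mathcal{R}$ induces a feasible flow $f_{\mathcal{R}}$ in $N$ whose cost equals the total gap length of $\mathcal{R}$, and (b) every feasible flow $f$ in $N$ can be converted into a valid representation $\mathcal{R}_f$ whose total gap length equals the cost of $f$. Together these two directions imply that the minima coincide, and the construction of $\mathcal{R}_f$ from a minimum-cost flow yields the desired area-minimal layout.

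For direction (a), given $\mathcal{R}$, I would assign to each inter-layer edge of $N$ (representing a possible vertical contact between a rectangle or gap $u$ on layer $i$ and a rectangle or gap $v$ on layer $i+1$) a flow equal to the horizontal overlap of the $x$-intervals of $u$ and $v$ in $\mathcal{R}$. Intra-rectangle edges $(v^a, v^b)$ receive flow $w_{i,j}$, saturating their tight capacities. Flow conservation at rectangle nodes follows because the total vertical overlap a rectangle shares with elements directly above (respectively below) sums to its width. Validity of $\mathcal{R}$ ensures that whenever two rectangles on adjacent layers share positive overlap their vertices are adjacent in $G$, so the corresponding edge does exist in $N$ and no false adjacency is charged to a missing edge. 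The cost of $f_{\mathcal{R}}$ equals the sum over gap vertices of their inflow, which is exactly the total gap width in $\mathcal{R}$.

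For direction (b), I would construct $\mathcal{R}_f$ layer by layer from the bottom up. Within each layer, the left-to-right sequence of rectangles and gaps is prescribed by $N$; rectangle widths are fixed by the tight lower bound and capacity constraints, while gap widths equal the inflow at their gap vertices. The $x$-positions are then obtained by cumulative sums starting from the left buffer. The key claim, proved by induction on $i$, is that for every inter-layer edge of $N$ carrying flow value $\varphi$, the two incident elements in $\mathcal{R}_f$ indeed share a vertical overlap of exactly $\varphi$. Since $N$ only contains edges between pairs that can touch without creating false adjacencies, this immediately yields a valid representation whose total gap length equals the cost of $f$ by construction.

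The main obstacle is the inductive step in direction (b), namely guaranteeing that the $x$-coordinates derived from cumulative widths within each layer are globally consistent with the per-edge flow values across layer boundaries. Two features of the construction are essential here: the buffer columns force the total flow crossing every layer cut to equal the same constant $w_{\max}\cdot K$, so the leftmost $x$-coordinates align across layers; and the planar embedding restricts each rectangle or gap to connect only to a contiguous, order-compatible range of upper-layer elements, so that cumulative left-to-right flow on layer $i$ matches cumulative left-to-right flow on layer $i+1$ piece by piece. Once this alignment is established, the polynomial runtime is immediate: $N$ has size polynomial in $|V|+|E|$, a minimum-cost flow is computable in polynomial time by standard algorithms, and $\mathcal{R}_f$ is read off from $f$ in time linear in the size of the network.
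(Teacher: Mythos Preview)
Your two-direction correspondence is the right architecture and is in fact more careful than the paper's own argument, which essentially just checks feasibility of $N$, observes that only gap-incident edges carry cost, and then reads off a layout row by row. The paper never explicitly proves direction~(a) at all, nor does it verify in the main proof that the row-by-row layout is free of false adjacencies; you do both.

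One point to tighten in direction~(b): your inductive claim that ``for every inter-layer edge carrying flow $\varphi$, the two incident elements share overlap exactly $\varphi$'' is stronger than what a minimum-cost flow guarantees. The paper itself flags this in its appendix: in a four-element window $a,b$ below and $c,d$ above with all four cross edges present, an optimal flow may route $a\to d$ and $b\to c$, which no rectangle layout realizes edge by edge. Your cumulative-sum construction of the $x$-coordinates is still correct for the theorem---positions depend only on the total inflow at each gap vertex, so the resulting layout has the same cost and is valid---but the per-edge overlap statement fails without an additional local uncrossing step. The paper handles this by swapping flow on crossing edge pairs after solving; you can either add the same post-processing, or weaken the inductive claim to the cumulative form you already sketch (prefix flow across the cut equals prefix width on either side), which is what the layout actually needs.
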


\begin{proof}
Given any graph $G=(V,E)$, the associated network $N$ has production equal to its consumption
$\sum_{v \in V'} b(v) = b(s) + b(t) = 0$. The source produces $b(s)=w_{max}\cdot K$ flow, which is available to every vertex on layer $1$ (except $v^b$ vertices whose incoming neighbor is $v^a$ on the same layer).
Any vertex $v^a_{1,j}$ must receive $w_{1,j}$ units of flow, as its only edge towards $v^b_{1,j}$ has capacity constraint $c=l=w_{1,j}$. 
Because $w_{1,j}\leq w_{max}$ and since there are at most $K$  vertices $v^a_{1,j}$ on layer $1$, the capacities can be satisfied.
Any edge in $E'$ goes from layer $i$ to $i+1$, except for $(v^a_{i,j},v^b_{i,j})$, but the exact amount of flow that comes from layer $i-1$ into $v^a_{i,j}$ will go through $v^b_{i,j}$ to layer $i+1$. 
Hence for the same reason as in layer $1$, there is enough flow to satisfy the edge capacity constraints, while excess flow is routed through (0-cost) buffers or (1-cost) gaps.

Since in this network only flow that goes into a gap vertex has a (non-zero) cost, flow into gap vertices, and therefore also the total gap width, is minimized. The minimum cost procedure finds this optimal flow $f$.

We construct a minimum-area representation $\mathcal{R}$ by placing rectangles row by row: we leave buffers and gaps equal to the flow routed through the corresponding vertices, and align all rows on the left. The total width of each row, including buffers and gaps, is $w_{max}\cdot K$. Since the flow through each rectangle is constant, the area of the buffers is maximized, to minimize the area occupied by gaps. \qed
\end{proof}

While this method minimizes the area occupied by the drawing it will not always lead to the representation with the minimum bounding box. 
To minimize the size of the bounding box, we propose to limit the amount of flow outgoing from the source node and incoming into the sink node. 
If the chosen bound is too small then the flow network will not be realizable. 
We can thus perform a binary search between the width of the longest layer  $W_{\max}$ as a lower bound and $w_{\max}\cdot K$ as an upper bound. 
%If the flow is realizable we look for the lower values, and if the flow cannot be computed, we keep searching in the higher half until we find the smallest flow value that is realizable. This limits directly the width of the drawing.
Assuming input widths have integer values, this method would add a $\mathcal{O}(\log (w_{\max}\cdot K-W_{\max}))$ factor to the flow runtime.

\section{Maximization of realized contacts}\label{sec:contact-max}

In this section we propose algorithms that maximize the number of realized contacts. We start with a linear-time algorithm for $L = 2$, followed by an integer linear programming model for $L > 2$. The complexity for $L>2$ remains open.

\subsection{Linear-time algorithm for $L=2$}

In this section we describe an algorithm~$\mathcal{A}$ for the case where the input has 2 layers. On a 2-layer graph, a vertex either has one neighbor in the adjacent layer, or more than one. If a vertex has one neighbor we call it a \emph{T-vertex} and if there are more neighbors, it is called a \emph{fan}. A \emph{block} is a maximal sequence of consecutive rectangles in a layer $i$, for which each horizontal contact is realized. A block from the $j$th until the $l$th vertex of row $i$ is the sequence $(R_{i,j},\ldots,R_{i,l})$, where for each $k \in [j, l-1]$ holds that $(R_{i,k}, R_{i,k+1}) \in \mathcal{R}$.

In a given 2-layer graph, there will always be a layer that starts on a fan, while the opposite layer starts on a T-vertex. Assume without loss of generality that $R_{0,0}$ is a fan, otherwise swap the two rows for the duration of the algorithm. 
Algorithm~$\mathcal{A}$ starts by placing $R_{0,0}$, followed by all its neighbors on the adjacent layer, from left to right, ending with $R_{1,j}$. Rectangle~$R_{1,j}$ is again a fan, and the process of placing all opposite-row neighbors, left to right, is repeated for $R_{1,j}$ and every consecutive fan, as they are encountered. We call this placement ordering $\prec$.

When we add a rectangle $R_i$ (fan or T-vertex), we always first attempt to add it next to its horizontal predecessor, if possible (no false adjacency). Though, if the horizontal predecessor is too far left, we place $R_i$ in the leftmost allowed position.
Let $R_0$ be the first rectangle in $\prec$, which is placed on position $x_0$ by $\mathcal{A}$. Algorithm~$\mathcal{A}$ then proceeds by adding $R_1$, representing a T-vertex in the opposite row. Rectangle $R_1$, with width~$w_1$, is placed leftmost, on coordinate $x_0+\varepsilon - w_1$.
We then proceed to add all rectangles corresponding to other T-vertices of $R_0$ one by one, such that all horizontal contacts are realized. Once a T-vertex $R_i$ cannot reach $R_0$, we store the amount of contacts currently realized by $R_0$ as well as its position $x_0$, and slide $R_0$ rightward, to the leftmost position~$x_0'$ that allows a contact of $\varepsilon $ with $R_i$. Note that, since we placed $R_1$ in the leftmost position that allowed a contact of width $\varepsilon$ with $R_0$, we lose at least one contact by moving $R_0$ rightward. If placing $R_0$ at $x_0'$ ties the amount of contacts of $x_0$, then we set $x_0 \coloneqq x_0'$. If $x_0'$ is strictly worse, then the representation is reset to having $R_0$ at $x_0$. From that point on, every time we add a new rectangle, we attempt this shift of the fan and update the position when we find a tie or when we realize more contacts. We repeat this operation for each rectangle, following the order $\prec$, always shifting the last encountered fan.

However, once we consider a fan $R_f$ that is not $R_0$, any sliding operation will be attempted on the block containing $R_f$, rather than just $R_f$. As before, we always shift the block to the leftmost position that realizes the contact between $R_f$ and the newly placed rectangle. We remember position~$x_f$ that realizes most contacts, and favor the newest position on a tie. In case moving the block containing $R_f$ leads to strictly less contacts, we also try to move only $R_f$ instead. This starts a new block containing just $R_f$.
Below we sketch the proof for Theorem~\ref{thm:2layer}, the complete proof can be found in \ifArxiv Appendix~\ref{app:alg}. \else the full paper version~\cite{nvw-laprcr-21}.
\fi

\begin{theorem}\label{thm:2layer}
Algorithm~$\mathcal{A}$ computes a contact maximal valid representation with contacts of length at least $\varepsilon$ for a given 2-layer graph~$G$ in linear time.
\end{theorem}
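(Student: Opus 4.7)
The plan is to establish three properties of Algorithm $\mathcal{A}$: linear running time, validity of the output, and contact-maximality. For the running time, observe that each rectangle is placed exactly once in the ordering $\prec$ and that the only nontrivial work per step is a potential rightward slide of the current fan or its entire block. A standard amortization charges the cost of a slide to the rectangles that leave the active block after the slide, so that each rectangle incurs amortized constant work and the overall cost is $O(|V|)$.

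Validity follows by induction on $\prec$. When placing $R_i$, the algorithm picks the leftmost $x$-coordinate consistent with (i)~the prescribed horizontal contact with its predecessor, if any, and (ii)~absence of false adjacencies on the opposite row; the latter holds because every rectangle a fan can reach is by construction one of its neighbors in~$G$. A slide of $R_f$, or of the block containing $R_f$, is performed only to realize a prescribed vertical contact and stops at the leftmost such position, so no false adjacency can be introduced there either.

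Contact-maximality is the main obstacle and I would prove it by an exchange argument against an arbitrary contact-maximal valid representation $\mathcal{R}^*$. Processing rectangles in order~$\prec$, I would maintain the invariant that $\mathcal{A}$'s partial solution realizes at least as many contacts among the placed rectangles as $\mathcal{R}^*$ and keeps the $x$-coordinate of the current fan no smaller than in $\mathcal{R}^*$, so that the remaining rectangles retain at least the same flexibility. The crucial step is verifying the local decision rule: whenever $\mathcal{A}$ chooses between keeping the fan at $x_f$ and shifting it to $x_f'$ to realize a new vertical contact, the shift is accepted iff it strictly increases the total count (or matches it, with tie-breaking toward the newer position), and a shift sacrifices at least one already realized horizontal or vertical contact by the leftmost-placement rule for $R_1$.

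The hardest case is a non-initial fan $R_f$ whose block might be slid either wholly or, failing that, around $R_f$ alone. I expect to need a careful case analysis showing that these two alternatives exhaust the relevant rearrangements of an optimal solution: either $\mathcal{R}^*$ keeps the block aligned (handled by the block-shift), or it splits the block at $R_f$ (handled by the single-rectangle shift). Combined with the tie-breaking rule, this should let me exchange any deviation of $\mathcal{R}^*$ from $\mathcal{A}$'s decision at each step without decreasing the contact count, finishing the maximality proof.
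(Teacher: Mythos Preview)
Your overall structure (induction along $\prec$, case analysis on the last-placed rectangle, amortized linear time) matches the paper's, but your correctness invariant is different from the paper's and, as stated, does not obviously hold. The paper maintains the stronger invariant that the partial layout on the first $n$ rectangles is \emph{contact-maximal among all representations of those $n$ rectangles}, not merely at least as good as the restriction of one fixed optimum $\mathcal{R}^*$. The key observation enabling this is a simple upper bound: if the partial on $n-1$ rectangles is optimal with $k$ contacts, then any representation of the first $n$ rectangles has at most $k+2$ contacts (remove $R_n$ with its at most two incident edges). With this bound in hand, the case analysis only has to show that whenever both contacts of $R_n$ cannot be realized simultaneously, $k+1$ is in fact optimal; the paper does this by forcing the horizontal predecessor $R_l$ to be a fan (resp.\ a T-vertex) in the two non-trivial situations.

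Your two-part invariant --- $\mathcal{A}$'s prefix realizes at least as many contacts as $\mathcal{R}^*$'s \emph{and} $\mathcal{A}$'s current fan sits at least as far right as in $\mathcal{R}^*$ --- can fail at a reset. When the trial shift of $R_f$ to $x_f'$ strictly loses contacts and $\mathcal{A}$ reverts to $x_f$, the fixed optimum $\mathcal{R}^*$ may well have $R_f$ at (or beyond) $x_f'$, trading a local loss now for gains later. At that moment your positional clause is violated and the ``remaining flexibility'' argument breaks. Patching this via exchange requires arguing that pulling $\mathcal{R}^*$'s fan back to $x_f$ never hurts the global count, which essentially needs the same $k{+}2$ bound and case analysis the paper uses --- so you end up reproving their invariant anyway. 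It is cleaner to adopt the paper's absolute-optimality invariant from the start.
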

\begin{proof}[Sketch]
We show that during algorithm~$\mathcal{A}$, the invariant holds that a representation of the first $n$ rectangles in $\prec$ maximizes the number of contacts.

We assume that the invariant holds after $\mathcal{A}$ placed $n-1$ rectangles, such that the current representation $\mathcal{R}^*$ is a contact maximal representation of the first $n-1$ rectangles in $\prec$, and realizes $k$ contacts. Algorithm~$\mathcal{A}$ now adds the next rectangle~$R_n$. We show that the new representation is contact maximal.

\begin{figure}[t]
	\centering    
	\includegraphics{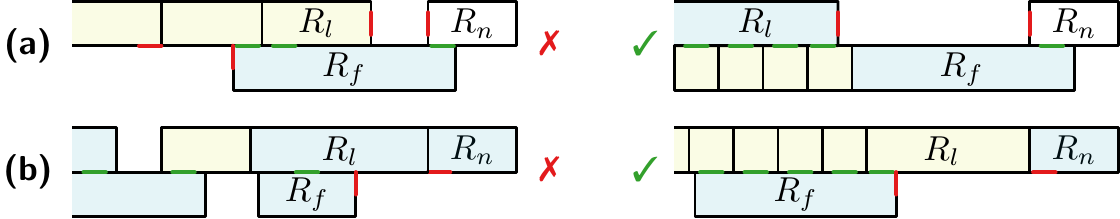}
	\caption{Two configurations where $R_l$ is necessarily a fan (blue) or a T-vertex (yellow). \textbf{(a)} If $R_n$ can achieve only a vertical contact, $R_f$ is a fan. \textbf{(b)} If $R_n$ can achieve only a horizontal contact and is a fan, $R_f$ is a T-vertex.}
	\label{fig:othercases}
\end{figure}

We first prove that the maximum number of contacts that the new representation can realize is $k+2$, since $R_n$ can achieve at most one vertical and one horizontal contact. If these contacts happen naturally, when placing $R_n$ leftmost, then the invariant trivially holds. We therefore prove via a case distinction that in all other cases $k+1$ adjacencies are optimal. We distinguish between the contact that is achieved by $R_n$, either vertical or horizontal. A sole vertical contact with fan (necessarily, forced by the placement ordering) $R_f$ is achieved only if the horizontal predecessor~$R_l$ of $R_n$ is a fan, as shown in Fig.~\ref{fig:othercases}a. If $R_n$ is a T-vertex, a single horizontal contact can arise only if $R_f$ is not moved to $R_n$, as can be seen in Fig.~\ref{fig:casedistinct}. However, when $R_n$ is a fan, this requires $R_l$ to be a T-vertex, see Fig.~\ref{fig:othercases}b. The invariant will therefore still be true after $\mathcal{A}$ added all rectangles, producing a contact maximal representation.

Algorithm~$\mathcal{A}$ considers each rectangle either once, or its degree many times, when a fan is shifted. As the input graph $G$ is planar, $\mathcal{A}$ runs in linear time. \qed
\end{proof}

\begin{figure}[b]
	\centering    
	\includegraphics{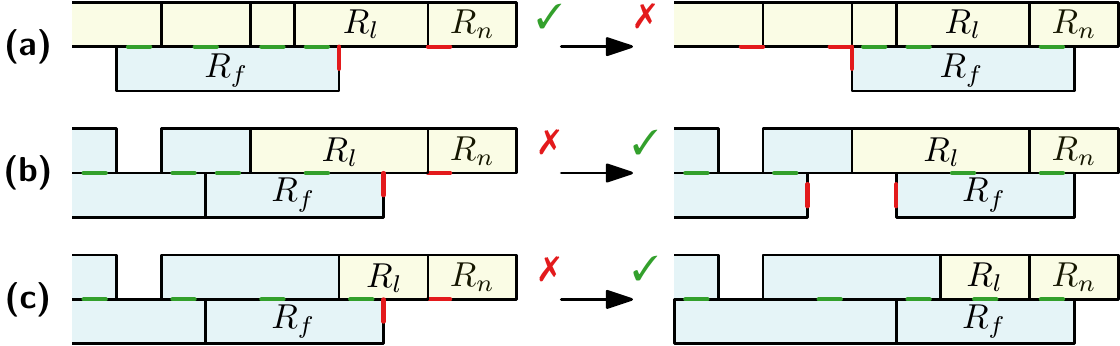}
	\caption{Three configurations where T-vertex $R_n$ does not realize vertical contacts with $R_f$ initially. We move $R_f$ and either \textbf{(a)} reset if the number of contacts is strictly worse, or save when we find \textbf{(b)} a tie, or \textbf{(c)} an increase in contacts.}
	\label{fig:casedistinct}
\end{figure}

\subsection{ILP}

To solve the contact maximization problem on $L > 2$ layers we propose an ILP formulation, which intuitively works as follows. We create a binary contact variable $c(e)$ for each edge~$e$ in the input graph. If a contact is not realized, we set $c(e)=1$ to satisfy the position constraints, otherwise we can set $c(e)=0$.
To handle false adjacencies we add for each rectangle a constraint on the first false contact that happens from the right and left on the row above, if they exist. We use hard constraints on the rectangle coordinates to prevent the false adjacencies. The objective is to minimize the sum over all contact variables, under all these constraints, to maximize the number of realized contacts in a solution. Additional details can be found in \ifArxiv Appendix~\ref{app:ILP}. \else the full paper~\cite{nvw-laprcr-21}.
\fi
\begin{equation}
  \label{eq:sum-formula}
  \text{minimize} \sum_{(v,v')\in E} c(v,v')
\end{equation}

The following inequalities ensure that there is no overlap between rectangles on the same layer (2), and check whether the horizontal contact is realized (3).
\begin{align}
	x_{i,j} + w_{i,j} &\le x_{i,j+1} &&\forall (v_{i,j},v_{i,j+1})\in E\\
	x_{i,j+1} &\le x_{i,j} + w_{i,j} + c(v_{i,j},v_{i,j+1}) M &&\forall (v_{i,j},v_{i,j+1})\in E
\end{align}

The next inequalities verify that the contacts between rectangle $R_{i,j}$ and all of its neighbors on layer $i+1$ are realized.
\begin{align}
	x_{i+1,j'} &\le x_{i,j} + w_{i,j} - \varepsilon + c(v_{i,j},v_{i+1,j'})M &&\forall e(v_{i,j},v_{i+1,j'})\in E\\
	x_{i,j} &\le x_{i+1,j'} + w_{i+1,j'} - \varepsilon + c(v_{i,j},v_{i+1,j'})M &&\forall e(v_{i,j},v_{i+1,j'})\in E
\end{align}

Finally, we model false adjacencies using pairs $(v_{i,j}, v_{i+1, j'})$ stored in sets $F_L$ (resp. $F_R$) that indicate the index of the first block in row $i+1$ that is left (resp. right) of a neighbor of $R_{i,j}$, but is not itself a neighbor of $R_{i,j}$.
\begin{align}
	x_{i+1,j'}+w_{i+1,j'} &\le x_{i,j} &&\forall (v_{i,j},v_{i+1,j'})\in F_L\\
	x_{i,j}+w_{i,j} &\le x_{i+1,j'} &&\forall (v_{i,j},v_{i+1,j'})\in F_R
\end{align}

\bibliographystyle{splncs04}
\bibliography{GD-ref}

\begin{thebibliography}{10}
\providecommand{\url}[1]{\texttt{#1}}
\providecommand{\urlprefix}{URL }
\providecommand{\doi}[1]{https://doi.org/#1}

\bibitem{barth_semantic_2014}
Barth, L., Fabrikant, S.I., Kobourov, S.G., Lubiw, A., Nöllenburg, M.,
  Okamoto, Y., Pupyrev, S., Squarcella, C., Ueckerdt, T., Wolff, A.: Semantic
  word cloud representations: Hardness and approximation algorithms. In: Pardo,
  A., Viola, A. (eds.) Theoretical Informatics ({LATIN}'14). LNCS, vol.~8392,
  pp. 514--525. Springer (2014). \doi{10.1007/978-3-642-54423-1_45}

\bibitem{barth_experimental_2014}
Barth, L., Kobourov, S.G., Pupyrev, S.: Experimental comparison of semantic
  word clouds. In: Gudmundsson, J., Katajainen, J. (eds.) Experimental
  Algorithms ({SEA}'14). pp. 247--258. LNCS, Springer (2014).
  \doi{10.1007/978-3-319-07959-2_21}

\bibitem{bekos_improved_2017}
Bekos, M.A., van Dijk, T.C., Fink, M., Kindermann, P., Kobourov, S., Pupyrev,
  S., Spoerhase, J., Wolff, A.: Improved approximation algorithms for box
  contact representations. Algorithmica  \textbf{77}(3),  902--920 (2017).
  \doi{10.1007/s00453-016-0121-3}

\bibitem{buchin_geo_2016}
Buchin, K., Creemers, D., Lazzarotto, A., Speckmann, B., Wulms, J.: Geo word
  clouds. In: {IEEE} Pacific Visualization ({PacificVis}'16). pp. 144--151
  (2016). \doi{10.1109/PACIFICVIS.2016.7465262}

\bibitem{buchsbaum_rectangular_2008}
Buchsbaum, A.L., Gansner, E.R., Procopiuc, C.M., Venkatasubramanian, S.:
  Rectangular layouts and contact graphs. ACM Transactions on Algorithms
  \textbf{4}(1),  8:1--8:28 (2008). \doi{10.1145/1328911.1328919}

\bibitem{cui_context_2010}
Cui, W., Wu, Y., Liu, S., Wei, F., Zhou, M.X., Qu, H.: Context preserving
  dynamic word cloud visualization. In: {IEEE} Pacific Visualization
  ({PacificVis}'10). pp. 121--128 (2010). \doi{10.1109/PACIFICVIS.2010.5429600}

\bibitem{f-rsrpg-13}
Felsner, S.: Rectangle and square representations of planar graphs. In: Pach,
  J. (ed.) Thirty Essays on Geometric Graph Theory, pp. 213--248. Springer
  (2013). \doi{10.1007/978-1-4614-0110-0_12}

\bibitem{fmr-tcg-94}
de~Fraysseix, H., de~Mendez, P.O., Rosenstiehl, P.: On triangle contact graphs.
  Combinatorics, Probability and Computing  \textbf{3},  233--246 (1994).
  \doi{10.1017/S0963548300001139}

\bibitem{hearst_evaluation_2020}
Hearst, M.A., Pedersen, E., Patil, L., Lee, E., Laskowski, P., Franconeri, S.:
  An evaluation of semantically grouped word cloud designs. IEEE Transactions
  on Visualization and Computer Graphics  \textbf{26}(9),  2748--2761 (2020).
  \doi{10.1109/TVCG.2019.2904683}

\bibitem{k-kka-36}
Koebe, P.: {Kontaktprobleme der konformen Abbildung}. Ber. Sächs. Akad. Wiss.
  Leipzig, Math.-Phys. Klasse  \textbf{88},  141--164 (1936)

\bibitem{van_kreveld_rectangular_2007}
van Kreveld, M., Speckmann, B.: On rectangular cartograms. Computational
  Geometry  \textbf{37}(3),  175--187 (2007).
  \doi{10.1016/j.comgeo.2006.06.002}

\bibitem{li_metro-wordle:_2018}
Li, C., Dong, X., Yuan, X.: Metro-wordle: An interactive visualization for
  urban text distributions based on wordle. Visual Informatics  \textbf{2}(1),
  50--59 (2018). \doi{10.1016/j.visinf.2018.04.006}

\bibitem{nk-sc-16}
Nusrat, S., Kobourov, S.: The state of the art in cartograms. Computer Graphics
  Forum  \textbf{35}(3),  619--642 (2016). \doi{10.1111/cgf.12932}

\bibitem{viegas_participatory_2009}
Viegas, F.B., Wattenberg, M., Feinberg, J.: Participatory visualization with
  wordle. IEEE Transactions on Visualization and Computer Graphics
  \textbf{15}(6),  1137--1144 (2009). \doi{10.1109/TVCG.2009.171}

\bibitem{wang_edwordle:_2018}
Wang, Y., Chu, X., Bao, C., Zhu, L., Deussen, O., Chen, B., Sedlmair, M.:
  Edwordle: Consistency-preserving word cloud editing. IEEE Transactions on
  Visualization and Computer Graphics  \textbf{24}(1),  647--656 (2018).
  \doi{10.1109/TVCG.2017.2745859}

\bibitem{wang_shapewordle:_2020}
Wang, Y., Chu, X., Zhang, K., Bao, C., Li, X., Zhang, J., Fu, C.W., Hurter, C.,
  Deussen, O., Lee, B.: Shapewordle: Tailoring wordles using shape-aware
  archimedean spirals. IEEE Transactions on Visualization and Computer Graphics
   \textbf{26}(1),  991--1000 (2020). \doi{10.1109/TVCG.2019.2934783}

\bibitem{wu_semantic-preserving_2011}
Wu, Y., Provan, T., Wei, F., Liu, S., Ma, K.L.: Semantic-preserving word clouds
  by seam carving. Computer Graphics Forum  \textbf{30}(3),  741--750 (2011).
  \doi{10.1111/j.1467-8659.2011.01923.x}

\bibitem{ys-fgd2rm-93}
Yeap, K.H., Sarrafzadeh, M.: Floor-planning by graph dualization: 2-concave
  rectilinear modules. SIAM J. Comput.  \textbf{22}(3),  500--526 (1993).
  \doi{10.1137/0222035}

\end{thebibliography}

\ifArxiv
\newpage
\appendix
\section{Flow construction}\label{app:flow}

For each $v_{i,j} \in V$ we create two copies $v^a_{i,j}$ and $v^b_{i,j}$ in $V'$; for each pair $v_{i,j}, v_{i,j+1}$ we introduce a gap vertex $g_{i,j}$, for each layer a left and right buffer vertex $l_i$ and $r_i$ and a global source and sink, $s$ and $t$. 
All the vertices $v$ have $b(v)=0$ except $b(s)=w_{max}\cdot K$ and $b(t)=-w_{max}\cdot K$ with $w_{max}$ the width of the widest rectangle and $K$ the maximum number of rectangles per layer.

Unless stated otherwise, each edge $e$ has $c(e)=\infty$, $l(e)=0$ and $cost(e)=0$. 
The rectangle-rectangle edges for two vertically adjacent vertices $u$ and $v$ are from the $u^b$ vertex on layer $i$ to the $v^a$ vertex on layer $i+1$. 
For each $v^a_{i,j},v^b_{i,j}$ pair in $G'$ we add an edge $e_{i,j}$ from $v^a_{i,j}$ to $v^b_{i,j}$ with $c(e_{i,j})=w_{i,j}$ and $l(e_{i,j})=w_{i,j}$.
This ensures that a vertex must receive exactly as much flow as the width of the rectangle it represents. 
We also add edges going in and out of the buffer vertices: to minimize the gaps it is preferable for the outer rectangles to get missing flow from the buffers on the outside or to route excess flow to these buffers.

Edges from rectangle vertices to gap vertices are defined as follows.
For a vertex $v^b_{i,j}$, we add an edge to all vertices $v^a$ representing its neighbors in $G$ on layer $i+1$, and we add edges to the gaps left and right of those neighbors. 
All edges to gaps have $cost=1$, to penalize the creation of gaps.
Lastly we add edges from gaps to other vertices. 
For a gap vertex $g_{i,j}$ we look at its left and right neighbors $v^b_{i,j}$ and $v^b_{i,j+1}$ respectively. We add edges from $g_{i,j}$ to all rectangles and gaps that have an incoming edge from $v^b_{i,j}$ and $v^b_{i,j+1}$, and assign $cost=0$ for edges into rectangles, and $cost=1$ for edges into gaps. We do not penalize the outgoing flow of a gap, since we already count the incoming flow.

From this network we can easily deduce that the minimum cost flow corresponds to the solution that minimizes total gap length: only gap vertices have a cost, hence the flow avoids these vertices whenever possible. We can construct the network and then solve the minimum flow problem in polynomial time. In the obtained solution, the flow values found on each edge should represent the length of the overlap between these elements, which allow us to construct the corresponding representation. However, in practice the relationship between units of flow and contact length is not always direct. Indeed, consider a 4 vertex configuration where $a$ lies bottom left, $b$ bottom right, $c$ top left and $d$ top right, and both $a$ and $b$ have an edge to both $c$ and $d$. Note that in such a case $a$ and $d$, or $c$ and $b$, would be gaps. It is possible that $a$ favors sending its flow to $d$ and $b$ to $c$, which is impossible to represent as contacts in a configuration of rectangles.

Once the flow has been computed, we can locally swap the required flow between the crossing edges to resolve those \emph{crossing patterns}.

\section{Linear-time algorithm -- Omitted proof}\label{app:alg}

\begin{usecounterof}{theorem}{thm:2layer}
	Algorithm~$\mathcal{A}$ computes a contact maximal valid representation with contacts of length at least $\varepsilon$ for a given 2-layer graph~$G$ in linear time.
\end{usecounterof}
\begin{proof}
	We show that during the placement of rectangles by algorithm~$\mathcal{A}$, the invariant holds that a representation computed for the first $n$ rectangles in $\prec$ achieves a maximum number of contacts.
	First observe that, we start by placing $R_0$ and $R_1$, such that their only contact is realized. The invariant therefore holds at the start. 
	
	Now assume that the invariant holds after $\mathcal{A}$ placed $n-1$ rectangles, such that the current representation $\mathcal{R}^*$ is a contact maximal representation of the first $n-1$ rectangles in $\prec$, and realizes $k$ contacts. Algorithm~$\mathcal{A}$ now adds the next rectangle~$R_n$.
	We show that the new representation is contact maximal.
	
	The maximum number of contacts that the new representation can realize is $k+2$,
	because rectangle $R_n$ can achieve at most one vertical and one horizontal contact. If there was a representation of the $n$ rectangles that realized $k'>k+2$ contacts, then removing $R_n$ would leave $k'-2>k$ contacts, which contradicts our assumption that $\mathcal{R}^*$ is optimal.
	If $R_n$ naturally realizes its vertical and horizontal contacts, the representation realizes $k+2$ contacts.
	There is no configuration where we cannot realize any contact when adding rectangle~$R_n$ to $\mathcal{R}^*$, since one of the rows extends further, and hence $R_n$ either can achieve its horizontal contact, or the opposing row extends at least $\varepsilon$ and the vertical contact can be realized. We therefore consider only the cases where adding $R_n$ results in a representation with $k+1$ contacts.
	Throughout the rest of the proof, we refer to the vertical fan neighbor of $R_n$, that has already been placed, as $R_f$. $R_f$ necessarily lies in the opposite layer. The only rectangles that will be moved when placing $R_n$ are the rectangles in the block containing $R_f$. Any other rectangles and adjacencies are untouched. So it is sufficient to count the adjcencies gained and lost by the block containing $R_f$ and $R_n$ to confirm that the representation of $n$ rectangles is optimal.
	
	\begin{itemize}
		\item If $R_n$ realizes only its vertical contact with $R_f$, then rectangle~$R_l$, neighboring $R_n$ on the left, is necessarily a fan (see Fig.~\ref{fig:othercases2}a)). Assume for contradiction that $R_l$ is a T-vertex. This implies that $R_f$ is $R_l$'s only allowed vertical contact. Since $R_n$ achieves contact with $R_f$ and $R_f$ is leftmost, $R_l$ must also be in contact with $R_f$. As a result the horizontal contact with $R_l$ always happens, contradicting our assumption that $R_n$ has only a vertical contact. As a consequence, $R_l$ is a fan, and the position for a fan is the position that maximizes contacts, favoring newer positions (or positions more to the right) for ties. An alternative to $\mathcal{R}^*$ with $R_l$ in contact with $R_f$ must therefore achieve less than $k$ contacts.
		Additionally, $R_n$ cannot be slid left of $R_f$ even though there is a gap towards $R_l$, as the rectangles left of $R_f$ are not part in the neighborhood of $R_n$. Thus, sliding $R_n$ further left would cause false adjacencies. There is no contact maximal representation that preserves the contact between $R_l$ and $R_n$, and hence $k+1$ contacts is maximal.
		
		\item If $R_n$ realizes only its horizontal contact, then it fails the contact with $R_f$. We distinguish between $R_n$ being a fan or a T-vertex.
		\begin{itemize}
			\item If $R_n$ is a T-vertex, then $R_f$ is too far to the left, and the algorithm will try to create representations where $R_f$ is in the leftmost position that realizes a contact with $R_n$: either by moving the block containing $R_f$, or alternatively moving just $R_f$.
			\begin{figure}[t]
				\centering    
				\includegraphics{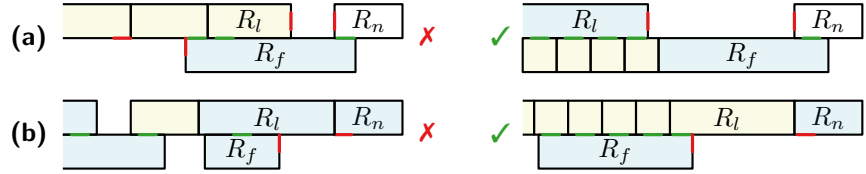}
				\caption{Two configurations where $R_l$ is necessarily a fan (blue) or a T-vertex (yellow). \textbf{(a)} If $R_n$ can only achieve a vertical contact, $R_f$ is a fan. \textbf{(b)} If $R_n$ can only achieve a horizontal contact and is a fan, $R_f$ is a T-vertex.}
				\label{fig:othercases2}
			\end{figure}
			\begin{itemize}
				\item If $R_f$ remains in its original position, which does not achieve the vertical contact with $R_n$, it means that the leftmost position of $R_f$, that achieves the contact with $R_n$ realizes strictly less contacts than the original position of $R_f$ in $\mathcal{R}^*$. As a result, there is no contact maximal representation that preserves the contact between $R_f$ and $R_n$, and thus $k+1$ contacts is optimal (see Fig.~\ref{fig:casedistinct2}a).
				\item When the block containing $R_f$ is shifted to the right to create the alternative representation that ties the number of contacts of $\mathcal{R}^*$, at most one vertical contact must have been lost, to gain the vertical one. Otherwise, $R_f$ can give up its horizontal contact to achieve the same result (see Fig~\ref{fig:casedistinct2}b). Thus, while $R_n$ realises 2 contacts, $R_f$ lost one and the representation realizes $k+1$ contact. This alternative representation is optimal and used by $\mathcal{A}$ instead of the representation where $R_f$ is not moved rightward.
				\item If the block containing $R_f$ does not lose contacts when shifted, then the overall representation will gain the contact between $R_n$ and $R_f$ contact. This configuration is optimal because it achieves $k+2$ contacts (see Fig.~\ref{fig:casedistinct2}c).
			\end{itemize}
			\item If $R_n$ is a fan, then rectangle $R_l$, again neighboring $R_n$ on the left, must be a T-vertex, since $R_f$ cannot be an empty fan (see Fig.~\ref{fig:othercases2}b). Assume for contradiction that $R_l$ is a fan, then $\mathcal{A}$ would try to realize the vertical contacts between $R_f$ and the fans $R_l$ and $R_n$ on the other row. In this case $R_f$ would necessarily be an empty fan, and hence $\mathcal{A}$ will try to sacrifice the contact with the left neighbor of $R_f$, to gain the vertical contact between $R_f$ and $R_n$. As there are no T-vertices for $R_f$ to lose contacts with, $\mathcal{A}$ will always find at least a tie between the configuration that has the vertical contact between $R_f$ and $R_n$, and $\mathcal{R}^*$, and hence prefer the new configuration. This contradicts our assumption that $R_n$ realizes only its horizontal contact, and hence $R_l$ must be a T-vertex. When $R_n$ is added, an alternative representation is created with $R_f$ shifted to the leftmost position that realizes the contact with $R_n$. Since this configuration is not chosen by~$\mathcal{A}$, the configuration realizes less than $k+1$ contacts, and therefore $k+1$ contacts is optimal.
		\end{itemize}
	\end{itemize}
	Thus, after adding $R_n$ to $\mathcal{R}^* $, algorithm~$\mathcal{A}$ produces a representation that realizes $k+2$ contacts, if it is possible, and $k+1$ otherwise, which is optimal. The invariant will therefore still be true after $\mathcal{A}$ added all rectangles, producing a contact maximal representation of our input graph.
	
	The algorithm considers each rectangle either only once, or as many times as its degree when an alternative shifted representation is created. As the input graph is planar, the algorithm runs in linear time. \qed
\end{proof}

\begin{figure}[t]
	\centering    
	\includegraphics{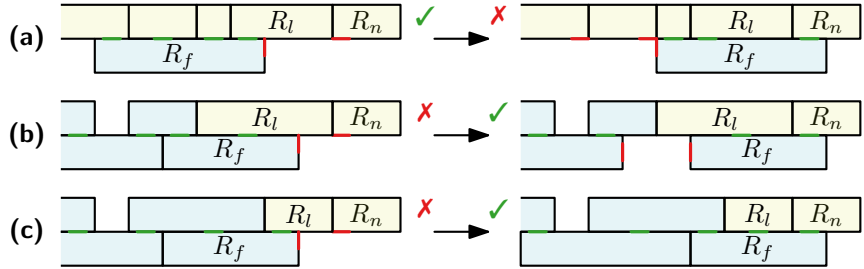}
	\caption{Three configurations where T-vertex $R_n$ does not realize vertical contacts with $R_f$ initially. We move $R_f$ and either \textbf{(a)} reset if the number of contacts is strictly worse, or save when we find \textbf{(b)} a tie, or \textbf{(c)} an increase in contacts.}
	\label{fig:casedistinct2}
\end{figure}

As one may already have realized, in some cases, when looking at two sequential fans that are not in contact, two T-vertices will be in a point contact. This happens when the T-vertices cannot overlap without creating a false adjacency, as we show in see Fig~\ref{fig:pointcontact}.

\begin{figure}[b]
	\centering    
	\includegraphics{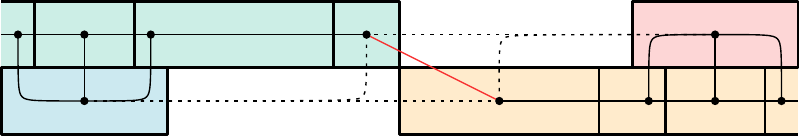}
	\caption{Edge case where the prefered position of the fan vertices (red and blue) causes a point contact on fan vertices (orange and green)}
	\label{fig:pointcontact}
\end{figure}

\section{ILP model}\label{app:ILP}

In this section we present an integer linear programming formulation $M$, which can be used to find optimal solutions to the contact maximization problem. In this formulation, the constants are the width~$w_{i,j}$ of each rectangle, the list~$E$ of contacts, the minimal length~$\varepsilon$ of a contact and a large integer~$M$ for indicator type constraints. Additionally, we have the lists~$F_L$ and $F_R$ of false adjacencies, defined as follows. Consider the rectangle representing vertex $v_{i,j}$ and its neighbors $N = \{v_{i+1,k},\ldots,v_{i+1,l}\}$ on the row above. For each such vertex $v_{i,j}$, we store the pairs $(v_{i,j}, v_{i+1,k-1})\in F_L$ and $(v_{i,j},v_{i+1,l+1})\in F_R$, if $v_{i+1,k-1}$ and $v_{i+1,l+1}$ exist. These pairs represent the false adjacencies of $R_{i,j}$ with the rightmost rectangle left of $N$ and the leftmost rectangle right of $N$, respectively. Note that preventing these two false adjacencies will prevent all false adjacencies of $R_{i,j}$ with rectangles in row $i+1$, as long as the order of the rectangles in both rows is correctly maintained.

The variables of the ILP are the following. We use $x_{i,j}$ as variables for the position of each rectangle, representing the bottom left corner of the rectangle, and Boolean variables~$c(v,v')$ which indicate that a contact is not realized between $v$ and $v'$.
We start by stating the optimization function, which minimizes the sum of all $c$ variables. In practice, this means we maximize realized adjacencies.

\begin{equation}
  \label{eq:sum-formula}
  \text{minimize} \sum_{(v,v')\in E} c(v,v')
\end{equation}

The following inequalities ensure firstly, that there is no overlap between rectangles on the same layer, and secondly, check whether the horizontal contact is realized: if $x_{i,j}$ is too small, then $c$ must be set to $1$, and hence the represented rectangle is too far left to have the contact naturally.
\begin{align}
	x_{i,j} + w_{i,j} &\le x_{i,j+1} &&\forall (v_{i,j},v_{i,j+1})\in E\\
	x_{i,j+1} &\le x_{i,j} + w_{i,j} + c(v_{i,j},v_{i,j+1}) M &&\forall (v_{i,j},v_{i,j+1})\in E
\end{align}

The next inequalities ensure that the contacts between rectangle $R_{i,j}$ and all of its neighbors on layer $i+1$ are realized, such that the left side of a neighbor is left of the right side of $R_{i,j}$, or symmetrically, the right side of a neighbor is right of the left side of $R_{i,j}$. Again, if we cannot satisfy the inequality with $c=0$, then it is set to 1, and since $M$ is a large value, the inequalities are trivially satisfied.
\begin{align}
	x_{i+1,j'} &\le x_{i,j} + w_{i,j} - \varepsilon + c(v_{i,j},v_{i+1,j'})M &&\forall e(v_{i,j},v_{i+1,j'})\in E\\
	x_{i,j} &\le x_{i+1,j'} + w_{i+1,j'} - \varepsilon + c(v_{i,j},v_{i+1,j'})M &&\forall e(v_{i,j},v_{i+1,j'})\in E
\end{align}

Finally, we model false adjacencies using the pairs stored in $F_L$ and $F_R$. For a pair in $F_L$, the rectangle $R_{i+1,j'}$, with which $R_{i,j}$ has a false adjacency, should stay left of $R_{i,j}$. This forces the left side of $R_{i,j}$ to be right of the right side of $R_{i+1,j'}$. Symmetrically, pairs in $F_R$ prevent false adjacencies when $R_{i+1,j'}$ is to the right of $R_{i,j}$.
\begin{align}
	x_{i+1,j'}+w_{i+1,j'} &\le x_{i,j} &&\forall (v_{i,j},v_{i+1,j'})\in F_L\\
	x_{i,j}+w_{i,j} &\le x_{i+1,j'} &&\forall (v_{i,j},v_{i+1,j'})\in F_R
\end{align}

\begin{theorem}
	Solving ILP model~$M$ optimally, results in an optimal solution for the contact maximization problem.
\end{theorem}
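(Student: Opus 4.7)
The plan is to prove the theorem by establishing a correspondence between feasible ILP solutions and valid representations, and showing that the objective function counts exactly the number of unrealized contacts. This gives both soundness and completeness of the model.

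First, I would argue \emph{soundness}: any feasible assignment of the variables yields a valid representation whose number of realized contacts equals $|E| - \sum_{(v,v') \in E} c(v,v')$. Constraints (2) ensure that rectangles sharing a layer do not overlap. For each edge $(v_{i,j}, v_{i+1,j'}) \in E$, I would show that $c(v_{i,j}, v_{i+1,j'}) = 0$ forces (4) and (5) to enforce that the horizontal projections of $R_{i,j}$ and $R_{i+1,j'}$ overlap by at least $\varepsilon$, hence a vertical contact of length at least $\varepsilon$ is realized; symmetrically, constraint (3) with $c = 0$ forces a horizontal contact. Conversely, if $c = 1$ the constant $M$ makes the inequalities vacuous, so the corresponding contact may or may not be realized. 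This already shows that $c(v,v') = 0$ is a sufficient condition for the contact to be realized; since the objective minimizes the sum of $c$-values, in any optimal solution $c(v,v') = 0$ iff the contact is realized. Constraints (6) and (7) prevent the false adjacencies recorded in $F_L$ and $F_R$, so no false vertical contact appears between $R_{i,j}$ and its immediate non-neighbors on the row above.

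The subtle step is arguing that the pair $(F_L, F_R)$ — which only forbids the single closest non-neighbor on each side — suffices to rule out \emph{all} false adjacencies between $R_{i,j}$ and row $i+1$. I would combine constraints (2) applied to row $i+1$ (which fix the left-to-right order of rectangles on that layer) with the observation that, since the input is internally triangulated on adjacent layers, the neighborhood $N$ of $v_{i,j}$ in row $i+1$ is a contiguous range $v_{i+1,k}, \dots, v_{i+1,l}$. Hence if $v_{i+1,k-1}$ lies entirely to the left of $R_{i,j}$ (constraint (6)), then by (2) so does every rectangle further to the left in row $i+1$; symmetrically for (7). This chain argument is what I expect to be the main obstacle, since it depends on the specific structural assumption that neighborhoods across layers are contiguous intervals.

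For \emph{completeness}, I would take any valid representation $\mathcal{R}$ that realizes $k$ contacts and construct an ILP solution by setting $x_{i,j}$ to the actual $x$-coordinate of $R_{i,j}$ in $\mathcal{R}$, setting $c(v,v') = 0$ for every realized contact $(v,v') \in E$ and $c(v,v') = 1$ otherwise. A direct verification shows that all constraints (2)–(7) are satisfied: (2) holds because $\mathcal{R}$ is planar on each layer; (3)–(5) follow from the definition of a realized contact of length at least $\varepsilon$; and (6)–(7) follow from validity of $\mathcal{R}$, i.e., the absence of false adjacencies. The objective value of this solution is exactly $|E| - k$.

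Finally, combining both directions, an optimum of the ILP is a solution with the smallest possible $\sum c(v,v')$ among all feasible solutions, which corresponds to a valid representation maximizing the number of realized contacts. Hence solving $M$ optimally solves the contact maximization problem optimally.
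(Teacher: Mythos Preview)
Your proposal is correct and follows the same core idea as the paper's proof: one shows that $c(v,v')=0$ forces the corresponding contact to be realized (via the case split on horizontal versus vertical edges and constraints (2)--(5)), and then observes that minimizing $\sum c$ therefore maximizes realized contacts. Your argument is in fact more complete than the paper's. The paper's proof only establishes the soundness direction ($c=0 \Rightarrow$ contact realized) and then concludes directly; it does not explicitly argue completeness (that every valid representation yields a feasible ILP assignment with matching objective), nor does it justify inside the proof why blocking only the two nearest non-neighbors via $F_L,F_R$ suffices---that claim appears only as a remark in the model description. Your explicit completeness step and the chaining argument using constraint~(2) and contiguity of inter-layer neighborhoods fill precisely these gaps.
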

\begin{proof}
First, we show that whenever a variable~$c$ is set to zero, then a contact is realized. Let us assume that $c = 0$, then one of the following two cases applies.
\begin{itemize}
	\item If $c(v_{i,j},v_{i,j+1})=0$, then the horizontal contact between $R_{i,j}$ and $R_{i,j+1}$ is realized. Since $x_{i+1,j} \le x_{i,j} + w_{i,j}$ and $x_{i,j} + w_{i,j} \le x_{i,j+1}$, then $x_{i,j} + w_{i,j} = x_{i,j+1}$. Thus the coordinate of the right side of $R_{i,j}$ lies on the left side of $R_{i,j+1}$, and there is a horizontal contact.
	\item If $c(v_{i,j},v_{i+1,j'})=0$, then the vertical contact between $R_{i,j}$ and $R_{i+1,j'}$ is realized. We have both $x_{i+1,j'} \le x_{i,j} + w_{i,j} - \varepsilon$ which means that the left side of $R_{i,j}$ lies left of the right side of $R_{i+1,j'}$ and $x_{i,j} \le x_{i+1,j'} + w_{i+1,j'}$ which means that the right side of $R_{i,j}$ lies right of the left side of $R_{i+1,j'}$.
\end{itemize}

Finally, since we minimize the sum of the $c$ values, as many as possible are set to $0$. Each $c$ value set to zero corresponds to a realized adjacency, and thus the number or contacts in the resulting representation is maximized. \qed
\end{proof}
\fi
\end{document}